\newcommand{\PP}{\mathbb{P}}
\newcommand{\EE}{\mathbb{E}}
\newcommand{\dd}{\mathrm{d}}
\newcommand{\al}{\alpha}
\newcommand{\be}{\beta}
\newcommand{\s}{\sigma^2}
\newcommand{\la}{\lambda}
\newcommand{\intd}{\int_\zeta^\xi}
\newtheorem{theorem}{Theorem}
\newtheorem{lemma}{Lemma}
\newtheorem{proposition}{Proposition}
\begin{document}

\title{Backscatter Communications for Wireless Powered Sensor Networks with Collision Resolution\vspace{-1mm}}

\author{Constantinos Psomas, \IEEEmembership{Member, IEEE}, and Ioannis Krikidis, \IEEEmembership{Senior Member, IEEE}\vspace{-6mm}
\thanks{C. Psomas and I. Krikidis are with the KIOS Research and Innovation Center of Excellence, University of Cyprus, Cyprus (e-mail: \{psomas, krikidis\}@ucy.ac.cy). 

This work was supported by the Research Promotion Foundation, Cyprus under the project COM-MED with pr. no. KOINA/ERANETMED/1114/03.
}}

\maketitle

\begin{abstract}
Wireless powered backscatter communications is an attractive technology for next-generation low-powered sensor networks such as the Internet of Things. However, backscattering suffers from collisions due to multiple simultaneous transmissions and a dyadic backscatter channel, which greatly attenuate the received signal at the reader. This letter deals with backscatter communications in sensor networks from a large-scale point-of-view and considers various collision resolution techniques: directional antennas, ultra-narrow band transmissions and successive interference cancellation. We derive analytical expressions for the decoding probability and our results show the significant gains, which can be achieved from the aforementioned techniques.\vspace{-2mm}
\end{abstract}

\begin{keywords}
Backscatter communications, sensor networks, ultra-narrow band, Poisson point process.\vspace{-3.2mm}
\end{keywords}

\section{Introduction}
The Internet of Things aims at the massive and ubiquitous deployment of sensors. This has raised questions about the energy requirements/consumption of these devices as it will be impractical or impossible, to individually recharge them on a regular basis. A practical method to resolve this issue, is wireless power transfer, i.e. energy harvesting from electromagnetic radiation. Specifically, a wireless powered communication network (WPCN) is a promising architecture for next-generation wireless sensor networks, where terminals power their uplink transmissions by harvesting energy from radio frequency (RF) signals, transmitted by a power beacon or an access point over a dedicated orthogonal channel \cite{ZENG}.

The research community mostly focuses on devices which harvest energy through a rectifying-antenna, a diode-based circuit that converts RF signals to direct-current voltage \cite{KRI}. Even though this approach is feasible, the amount of energy harvested via this operation is usually very small compared to the energy needed to operate the device. Another approach is backscattering, mainly used for RF identification (RFID), where part of the received signal is scattered back to the transceiver using a load modulation scheme \cite{BOOK}. Here, the energy consumption is very low as most of the components used are passive. However, the dyadic channel formed by the forward and backscatter links, greatly affects its performance. In the literature, few studies exist on backscatter communications from a system level standpoint. In \cite{BLE}, the performance of anti-collision techniques is studied in backscatter sensor networks; the authors show that the modulation employed at a sensor is crucial to achieve adequate network performance. The work in \cite{BEK} presents a 3D analytical model of ultra-high frequency RFID systems under cascaded fading channels and studies their performance in terms of the detection probability. Finally, the work in \cite{HUA2} considers backscatter communication networks with multiple power beacons, where each beacon serves a cluster of devices; the performance of the network is studied in terms of coverage probability and achievable rate.

Most works in the literature, consider simple topologies and deterministic spatial models. Also, they do not consider the dyadic backscatter channel, which is critical to the network's performance. Motivated by this, in this letter, we study backscatter sensor networks with spatial randomness to efficiently evaluate the large-scale path-loss effects in backscattering networks. We take into account the dyadic channel and consider the cases where the uplink and downlink channels are partially or fully correlated. Due to the existence of high interference levels (collisions) at the reader from multiple concurrent transmissions, we investigate the combined implementation of three collision resolution techniques to boost the performance. Specifically, we consider directional antennas and successive interference cancellation (SIC) at the reader as well as ultra-narrow band transmissions for multiple access. We derive analytical expressions for the decoding probability, i.e. the probability of decoding a random sensor, using stochastic geometry and show that the combination of these techniques provides massive gains in performance.

\underline{Notation}: $\Re[x]$ and $\Im[x]$ are the real and imaginary parts of $x$, respectively, $\imath = \sqrt{-1}$ is the imaginary unit, $\lVert x \rVert$ is the Euclidean norm of $x$, $\Gamma[\cdot,\cdot]$ is the upper incomplete gamma function, $\mathcal{I}_0(\cdot)$ and $\mathcal{K}_0(\cdot)$ are the modified Bessel functions of the first and second kind of order zero, respectively.\vspace{-1mm}

\section{System model}

\subsubsection{Topology}
Consider a single-cell WPCN with backscatter communications and randomly deployed sensors, utilizing a bandwidth $BW$. The coverage (reading) zone is modeled as a disc of radius $\xi$, with the reader located at the disc's origin with an exclusion zone of radius $\zeta$ around it, $\xi > \zeta \geq 1$. The exclusion zone defines the minimum reading distance between the reader and a sensor, whereas $\xi$ defines the maximum reading distance \cite{BOOK}; the restriction $\zeta \geq 1$ ensures the harvested energy is never greater than the transmitted power. The location of the sensors is modeled as a homogeneous Poisson point process (PPP) $\Psi = \{z_k\}$, $k \geq 1$, with density $\la$ \cite{HAE}; $z_k$ denotes the coordinates of the $k$-th sensor. Each sensor is equipped with a single antenna and a circuit responsible for modulating and backscattering information to the reader.

\subsubsection{Channel model}
We assume all wireless links suffer from small-scale block fading and large-scale path-loss effects. A backscatter channel is dyadic, that is, it is characterized by a forward link (reader to sensor) and a backscatter link (sensor to reader). We denote by $h^i_f$ and $h^i_b$ the channel coefficients for the forward and backscatter link with the $i$-th sensor, respectively; it is clear that the two links can be correlated by a coefficient $\rho \in [0,1]$ \cite{JDG}. The fading of forward and backscatter links is considered to be Rayleigh with variance $\s_f$ and $\s_b$, respectively\footnote{Rayleigh is used for simplicity but other channel models such as Rice or Nakagami could also be considered.}. Therefore, the probability distribution function (pdf) of the backscatter channel power is the pdf of the product of two exponential random variables. The pdf $f_h(h,\rho)$ of the $i$-th sensor's backscatter channel power $h_i = |h^i_f|^2 |h^i_b|^2$, where $h^i_f$ and $h^i_b$ have correlation $\rho$ is
\begin{align}\label{pdf_fad0}
f_h(h,\rho) = \frac{2\mu_f\mu_b}{1-\rho^2} \mathcal{I}_0\left(\frac{2 \rho\sqrt{\mu_f\mu_b h}}{1-\rho^2}\right) \!\mathcal{K}_0\left(\!\frac{2\sqrt{\mu_f\mu_b h}}{1-\rho^2}\right),
\end{align}
for uncorrelated/partial correlated channels, i.e. $\rho \in [0,1)$, and
\begin{align}\label{pdf_fad1}
f_h(h,1) = \frac{1}{2}\sqrt{\frac{\mu_f\mu_b}{h}}\exp\left\{-\sqrt{\mu_f\mu_b h}\right\},
\end{align}
for perfectly correlated channels, i.e. $\rho = 1$, where $\mu_f = 1/\s_f$ and $\mu_b = 1/\s_b$; throughout this work we use $\mu_f = \mu_b = 1$. The above can be derived from the pdfs for the product of two Rayleigh random variables given in \cite{JDG}. The path-loss model assumes that the received power is proportional to $d_i^{-\alpha}$, where $d_i = \lVert z_i \rVert$ is the Euclidean distance from the origin to the $i$-th sensor, $\alpha > 1$ is the path-loss exponent. The pdf of $d_i$ is \cite{JFC}
\begin{align}\label{dist_pdf}
f_d(x) = \frac{1}{\pi(\xi^2-\zeta^2)}.
\end{align}
Finally, we assume that all wireless links exhibit additive white Gaussian noise with variance $\sigma^2$.

\begin{figure}[t]
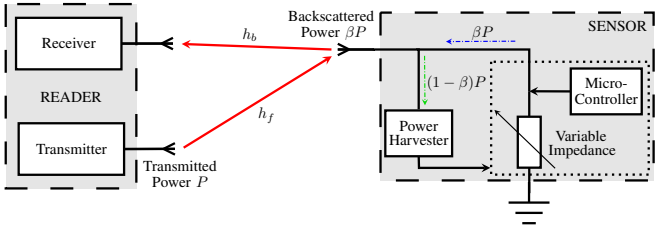
\centering
  \includestandalone[width=\linewidth]{back_comm}
  \caption{Wireless powered backscatter communications.}\label{back_comms}\vspace{-5mm}
\end{figure}

\subsubsection{Backscatter communications}
The reader transmits an unmodulated RF signal $s(t) = \sqrt{2 P} \Re[\exp\{\imath 2\pi ft\}]$ towards the sensors, where $P = \mathbb{E}[s^2(t)]$ is the reader's transmit power and $f$ denotes the carrier frequency. The received signal obtained by a sensor is modulated and reflected back to the reader \cite{BOOK}. We model the portion of the received power reflected back by a sensor with a reflection coefficient $\beta \in [0,1]$; this means $(1-\beta)P$ is used to power the load modulation scheme, which varies the impedance between different states, and $\be P$ is the backscattered power \cite{BOOK, HUA2}. The backscattered power highly depends on the received power, the impedance matching and the energy requirements of the circuit. Fig. \ref{back_comms} schematically depicts the considered backscatter communication scheme. The reader attempts to decode the signal from the $i$-th sensor by the complex baseband equivalent signal $y = \tilde{y}_i + \sum_{j\neq i} \tilde{y}_j+n,$ where $\tilde{y}_i = \sqrt{\be P d_i^{-2 \al}} h^i_f h^i_b x_i$, $x_i$ is the unit-energy transmitted symbol from the $i$-th sensor, and $n$ is a circularly symmetric complex Gaussian random variable with zero mean and variance $\s$. Then, the signal-to-interference-plus-noise ratio (SINR) of the reader for the $i$-th sensor is $\gamma_i = \frac{\be P h_i}{d_i^{2\al}(\s + \be P I_i)},$ where $I_i = \sum_{z_j \in \Psi \backslash \{z_i\}} \frac{h_j}{d_j^{2\al}}$. Note that the path-loss exponent is doubled as a result of the dyadic backscatter channel.\vspace{-1mm}

\section{Collision Resolution Techniques}\label{techniques}
The main issue in backscatter communication networks is the existence of collisions at the reader, due to the simultaneous transmission from multiple sensors, which reduces the reader's ability to decode a sensor's signal. To avoid such collisions, numerous techniques have been proposed \cite{BOOK}. In this work, we investigate the following.

\subsubsection{Sectorized directional antennas}
The reader is equipped with $D$ sectorized directional antennas, i.e. the beamwidth of the antenna is $\frac{2\pi}{D}$. Therefore, each sectorized antenna transmits towards a disk's sector of arc length $\frac{2\pi}{D}$, thus reducing the number of active sensors. This technique is commonly referred to as space division multiple access (SDMA). We assume that the antenna's main lobe gain is given by $G = \frac{D}{1+\epsilon(D-1)}$ where $\epsilon \in [0,1]$ defines the directional efficiency of the antenna \cite{HUN}.

\subsubsection{Ultra-narrow band transmissions}
The sensors employ ultra-narrow band transmissions using a random frequency division multiple access (FDMA) scheme \cite{MO}. Specifically, the $i$-th sensor selects a carrier frequency $f_i$ at random from the available bandwidth $BW$. Let $\delta$ be a system parameter, defining the frequency spacing in which collisions occur. Then, if $|f_i - f_j| < \delta$, $i \neq j$, the signals of the $i$-th and $j$-th sensor collide \cite{MO}. As a result, the collision probability is modeled by $p_c = \delta/BW$.

\subsubsection{Successive interference cancellation}\label{SIC}
We assume the reader employs SIC techniques if the SINR does not achieve the required threshold for a certain sensor \cite{CP}. Specifically, the reader first attempts to decode and remove the strongest interfering signal. If successful, the reader retries to decode the required signal; otherwise, the process is repeated up to $n$ times during which the reader will either manage to decode the required signal or after the $n$-th attempt it will be in outage.

\section{Performance Analysis}
In this section, we derive the probability of the reader decoding the backscattered signal of a random sensor. We first need the following lemma.

\begin{lemma}\label{lem1}
The characteristic function $\phi_I(t,\la,\zeta)$ of the interference term $I$ in a network of density $\la$, where the nearest interfering signal is at minimum distance $\zeta$, is given by
\begin{align}\label{char}
\phi_I(t,\la,\zeta) = \exp\!\left\{\!\pi\la\left(\zeta^2-\xi^2 + \int_0^\infty \Phi(t,h) f_h(h,\rho) \dd h\right)\!\right\},
\end{align}
where
{\small\begin{align}
\Phi(t,h) &= \frac{\left(-\imath t h\right)^\frac{1}{\al}}{\al} \Bigg(\Gamma\left[-\frac{1}{\al}, -\frac{\imath t h}{\xi^{2\al}}\right] - \Gamma\left[-\frac{1}{\al}, -\frac{\imath t h}{\zeta^{2\al}} \right]\Bigg).
\end{align}}
\end{lemma}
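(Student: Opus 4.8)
The plan is to recognise $I = \sum_{z_j \in \Psi} h_j/d_j^{2\al}$ as a shot-noise functional of the fading-marked PPP $\Psi$, in which each point $z_j$ carries an independent mark $h_j \sim f_h(\cdot,\rho)$, and to evaluate $\phi_I(t) = \EE[\exp\{\imath t I\}]$ through the probability generating functional (PGFL) of the process. First I would condition on the point pattern and use the independence of the marks to write $\phi_I(t) = \EE\big[\prod_{z_j \in \Psi} g(\lVert z_j\rVert)\big]$ with per-point average $g(r) = \int_0^\infty \exp\{\imath t h/r^{2\al}\}\, f_h(h,\rho)\,\dd h$ over a single fading variable. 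Applying the PGFL for a homogeneous PPP of density $\la$ on the annular reading zone $\{\zeta \le \lVert z\rVert \le \xi\}$ gives $\phi_I(t) = \exp\{-\la\int (1-g(\lVert z\rVert))\,\dd z\}$, and passing to polar coordinates ($\dd z = r\,\dd r\,\dd\theta$) integrates out the angle, leaving the radial factor $2\pi\la \intd (1-g(r))\,r\,\dd r$.

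Next I would split this integral. The constant part yields $\intd r\,\dd r = (\xi^2-\zeta^2)/2$, which after the prefactor $-2\pi\la$ contributes the $\pi\la(\zeta^2-\xi^2)$ term of \eqref{char}. For the remaining part I would invoke Fubini to interchange the $r$- and $h$-integrations (justified since $|\exp\{\imath t h/r^{2\al}\}| = 1$ and $f_h$ is integrable over a compact radial domain), isolating the inner radial integral as $\Phi(t,h) = 2\intd \exp\{\imath t h/r^{2\al}\}\,r\,\dd r$, so that the surviving term is exactly $\pi\la\int_0^\infty \Phi(t,h) f_h(h,\rho)\,\dd h$. Evaluating $\Phi$ is the analytic core: the substitution $s = r^{2\al}$ turns $r\,\dd r$ into $\tfrac{1}{2\al}s^{1/\al-1}\,\dd s$, and the further substitution $u = -\imath t h/s$ maps $\exp\{\imath t h/s\}$ to $\exp\{-u\}$ and $s^{1/\al-1}$ to $u^{-1/\al-1}$ (up to a constant prefactor), so the integral becomes a difference of upper incomplete gamma functions $\Gamma[-1/\al,\cdot]$ evaluated at the transformed endpoints $-\imath t h/\xi^{2\al}$ and $-\imath t h/\zeta^{2\al}$; collecting the prefactor $(-\imath t h)^{1/\al}/\al$ then reproduces the stated $\Phi(t,h)$.

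The main obstacle is controlling the complex (indeed purely imaginary) quantity $-\imath t h$ that appears both in the fractional power $(-\imath t h)^{1/\al}$ and as the lower limit of the incomplete gamma function: one must fix a single branch of the power and read $\Gamma[-1/\al,\cdot]$ as its analytic continuation to complex argument, defined by integrating $u^{-1/\al-1}\exp\{-u\}$ along the ray to $+\infty$, for which convergence is still guaranteed by the decay of $\exp\{-u\}$. Tracking the two endpoints through both substitutions then fixes the orientation of the contour and hence the order of the two gamma terms (the $\xi$-endpoint minus the $\zeta$-endpoint, with overall positive sign). The outer integration over $h$ against $f_h(h,\rho)$ is deliberately left unevaluated, which is precisely the form asserted in \eqref{char}.
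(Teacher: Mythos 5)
Your proposal is correct and follows essentially the same route as the paper: the probability generating functional of the (marked) PPP over the annulus, separation of the constant term giving $\pi\la(\zeta^2-\xi^2)$, interchange of the radial and fading integrations, and the change of variables $u=-\imath t h/r^{2\al}$ reducing the radial integral to a difference of upper incomplete gamma functions. Your explicit attention to the branch of $(-\imath t h)^{1/\al}$ and the contour defining $\Gamma[-1/\al,\cdot]$ for imaginary argument is a point the paper's proof leaves implicit, but it does not change the argument.
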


\begin{proof}
The characteristic function $\phi_I(t,\la,\zeta)$ of the interference term $I = \sum_{z_j \in \Psi} \frac{h_j}{d_j^{2\al}}$ is given by
\begin{align}
&\phi_I(t,\la,\zeta) = \EE_I\left[\exp\left\{\imath t I\right\}\right] = \EE_{\Psi, h_j} \left[\exp\left\{\imath t \sum_{z_j \in \Psi} \frac{h_j}{d_j^{2\al}}\right\} \right]\nonumber\\
&= \exp\left\{2\pi\la \intd \left(\EE_{h_j} \left[\exp\left\{\frac{\imath t h_j}{u^{2\al}}\right\}\right] - 1\right)u \dd u\right\}\label{pgfl}\\
&= \exp\Bigg\{\!\pi\la \Bigg(\!\zeta^2\!-\!\xi^2\!+\!\int_0^\infty\!\!\! \intd\! 2 u\exp\left\{\!\frac{\imath t h}{u^{2\al}}\!\right\}\! \dd u f_h(h,\rho)\dd h\!\Bigg)\!\Bigg\},\label{fad1}
\end{align}
where \eqref{pgfl} follows from the probability generating functional of a PPP \cite{HAE}, \eqref{fad1} follows by the integral evaluation of $u$ and by unconditioning on $h$ and $f_h(h,\rho)$ is given by \eqref{pdf_fad0} or \eqref{pdf_fad1}. Using $\int_a^b f(x)\dd x = \int_0^b f(x)\dd x-\int_0^a f(x)\dd x$, \eqref{char} can be derived by applying the transformation $v \to -\imath t \frac{h}{u^{2\al}}$ and by the definition of the upper incomplete gamma function \cite[8.350.2]{GRAD}.\vspace{-3mm}
\end{proof}

\subsection{SDMA and FDMA Scenario}
Initially, the performance with the two multiple-access schemes, SDMA and FDMA, is considered. The analysis is executed for a typical sensor, denoted by $o$.

\begin{theorem}\label{thm1}
The probability of decoding a random sensor is
\begin{align}\label{gil}
\Pi_d = \frac{1}{2} - &\frac{1}{\al\pi(\xi^2-\zeta^2)\tau^{\frac{1}{\al}}} \nonumber\\
&\quad\times \int_0^\infty \frac{1}{t}\Im\left[(\imath t)^{\frac{1}{\al}} \nu(t) \chi(t) \phi_{I_o}(t,\la',\zeta)\right] \dd t,
\end{align}
where $\tau$ is the target SINR threshold, $\nu(t) = \exp\left\{\frac{\imath t \s}{\be G P}\right\}$,
{\small\begin{align}\label{chi_fun}
\!\chi(t) \!=\! \int_0^\infty \!\!h^\frac{1}{\al} f_h(h,\rho) \Bigg(\!\Gamma\left[\!-\frac{1}{\al}, \!\frac{\imath t h}{\tau\xi^{2\al}}\right] - \Gamma\left[\!-\frac{1}{\al}, \frac{\imath t h}{\tau \zeta^{2\al}}\right]\!\Bigg) \dd h,
\end{align}}
and $\phi_{I_o}(t,\la',\zeta)$ is given by Lemma \ref{lem1} with $\la' = \frac{p_c}{D}\la$.
\end{theorem}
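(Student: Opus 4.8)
The plan is to express the decoding event as a statement about the sign of a single random variable and then recover its probability by Gil--Pelaez inversion. First I would rearrange the SINR constraint $\gamma_o = \frac{\be G P h_o}{d_o^{2\al}(\s + \be G P I_o)} \geq \tau$ into the equivalent form $X \leq 0$, where $X = I_o + \frac{\s}{\be G P} - \frac{h_o}{\tau d_o^{2\al}}$, so that $\Pi_d = \PP[X \leq 0] = F_X(0)$. Note that the directional gain $G$ survives only inside the noise offset $\s/(\be G P)$, since it multiplies both the desired and the in-beam interfering powers and hence cancels in the SINR ratio; this is precisely why $\nu(t)$ carries $\be G P$ while $\phi_{I_o}$ does not.

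Since the three summands of $X$ are mutually independent, its characteristic function factors as $\phi_X(t) = \nu(t)\,\phi_{I_o}(t,\la',\zeta)\,\EE\!\left[e^{-\imath t h_o/(\tau d_o^{2\al})}\right]$. The deterministic offset contributes $\nu(t) = \exp\{\imath t \s/(\be G P)\}$ directly. The interference contributes the characteristic function of Lemma \ref{lem1}, but evaluated at the thinned density $\la' = \frac{p_c}{D}\la$: SDMA restricts the active interferers to the typical sector, a factor $1/D$, while the random FDMA scheme retains only interferers within the collision spacing, a factor $p_c$, and independent thinning of a PPP again yields a PPP with the scaled density. Thus I may invoke Lemma \ref{lem1} verbatim with $\la \to \la'$.

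The remaining factor is the characteristic function of the scaled desired signal, and this is the computational heart of the proof. I would evaluate $\EE[e^{-\imath t h_o/(\tau d_o^{2\al})}]$ by first conditioning on $h_o$, integrating over the annular distance law \eqref{dist_pdf} through the area element $2\pi u\,\dd u$, and then applying the same change of variable $v \to \imath t h/(\tau u^{2\al})$ used in Lemma \ref{lem1}. The inner distance integral $\intd 2u\exp\{-\imath t h/(\tau u^{2\al})\}\,\dd u$ collapses to a difference of upper incomplete gamma functions via \cite[8.350.2]{GRAD}, producing $\frac{1}{\al}(\imath t h/\tau)^{1/\al}\big(\Gamma[-\tfrac1\al,\tfrac{\imath t h}{\tau\xi^{2\al}}]-\Gamma[-\tfrac1\al,\tfrac{\imath t h}{\tau\zeta^{2\al}}]\big)$; unconditioning on $h_o$ against $f_h(h,\rho)$ then reproduces exactly the integral $\chi(t)$ of \eqref{chi_fun} together with the prefactor $(\imath t)^{1/\al}/\big(\al(\xi^2-\zeta^2)\tau^{1/\al}\big)$.

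Finally I would apply the Gil--Pelaez inversion formula $F_X(0) = \tfrac12 - \tfrac1\pi\int_0^\infty t^{-1}\Im[\phi_X(t)]\,\dd t$ and substitute the factored $\phi_X(t) = \frac{(\imath t)^{1/\al}}{\al(\xi^2-\zeta^2)\tau^{1/\al}}\,\nu(t)\,\chi(t)\,\phi_{I_o}(t,\la',\zeta)$; the $1/\pi$ of the inversion formula supplies the missing $\pi$ in the denominator of \eqref{gil}, and collecting the constants outside the $\Im[\cdot]$ gives the claimed expression. The main obstacle is the distance integral of the third step: carrying the complex argument cleanly through the substitution, tracking the branch of $(\imath t)^{1/\al}$, and recognizing the incomplete-gamma form. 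A secondary technical point is verifying that $t^{-1}\Im[\phi_X(t)]$ is integrable near the origin (the $(\imath t)^{1/\al}$ factor, with $\al>1$, makes the integrand vanish there), so that the Gil--Pelaez representation is legitimately applicable.
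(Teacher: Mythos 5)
Your proposal is correct and follows essentially the same route as the paper: Gil--Pelaez inversion, independent thinning to $\la'=p_c\la/D$ for the FDMA/SDMA schemes, and the incomplete-gamma change of variables from Lemma \ref{lem1} to produce $\chi(t)$, with all prefactors matching \eqref{gil}. The only (immaterial) difference is that you apply Gil--Pelaez once to the composite variable $X=I_o+\s/(\be G P)-h_o/(\tau d_o^{2\al})$ and factor $\phi_X$ by independence, whereas the paper applies it to $F_{I_o}$ conditioned on $(h_o,d_o)$ and then averages $\exp\{-\imath t x\}$ over fading and distance --- the two orderings yield the identical integrand.
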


\begin{proof}
The decoding probability of the typical sensor is expressed as $\PP(\gamma_o > \tau)$ where $\tau$ is a predefined threshold. Thus, $\PP\left(\frac{\be G P h_o}{d_o^{2\al}(\s+ \be G P I_o)} > \tau\right) = F_{I_o}\left(\frac{h_o}{d_o^{2\al}\tau}-\frac{\s}{\be G P}\right)$ where $F_{I_o}(x)$ is the cumulative distribution function (cdf) of $I_o$. By using the Gil-Pelaez inversion theorem \cite{MDR}, we have $F_{I_o}(x) = 1/2-(1/\pi) \int_0^\infty (1/t) \Im\left[\exp\{-\imath t x\} \phi_{I_o}(t,\la,\zeta)\right] \dd t$, where $\phi_{I_o}(t,\la,\zeta)$ is the characteristic function of $I_o$ evaluated at $t$, derived in Lemma \ref{lem1}. Since a sensor interferes with probability $p_c$ and the beamwidth of the reader's sectorized antenna is $2\pi/D$, the density of the interfering sensors is thinned by $p_c/D$. Therefore, the characteristic function is given by $\phi_{I_o}(t,p_c\la/D,\zeta)$. We derive $\EE[\exp\{-\imath t x\}]$, where $x = \frac{h_o}{d_o^{2\al}\tau}-\frac{\s}{\be G P}$, as follows
\begin{align}
&\EE[\exp\{-\imath t x\}] = \exp\left\{\frac{\imath t \s}{\be G P}\right\} \EE_{\Psi,h_o}\left[\exp\left\{-\frac{\imath t h_o}{d_o^{2\al}\tau}\right\}\right]\nonumber\\
&= \frac{\exp\left\{\frac{\imath t \s}{\be G P}\right\}}{\pi(\xi^2-\zeta^2)} \!\int_0^{2\pi}\! \intd \EE_{h_o}\left[\exp\left\{-\frac{\imath t h_o}{u^{2\al}\tau}\right\}\right] u \dd u \dd \theta,\label{dist}
\end{align}
which follows by unconditioning on the distance and using \eqref{dist_pdf}. Following similar steps to the proof of Lemma \ref{lem1}, the final expression is deduced by some trivial algebraic operations.
\end{proof}

The expression of Theorem \ref{thm1} consists of the terms $\nu(t)$, $\chi(t)$ and $\phi_{I_o}(t,\la',\zeta)$, which correspond to the effect of noise/backscatter power, direct link and interference, respectively, on the decoding probability $\Pi_d$. It is clear that an increase in the transmit power $P$ will have a positive effect as it will alleviate the noise effects. For high transmit power, i.e. $P \to \infty$ we have $\nu(t) \to 1$, which provides the upper bound of $\Pi_d$ in terms of transmit power. On the other hand, for small $P$ or $\be$, $\nu(t)$ increases and, as a result, $\Pi_d$ decreases. The term $\chi(t)$ has a positive effect on the performance with a decrease of $\xi$, $\al$ and $\tau$ since these will improve the quality of the direct link. Similarly, $\phi_{I_o}(t,\la',\zeta)$ improves the performance with a decrease of $\la$ and $\xi$ which will reduce the interfering signals. These observations are validated in Section \ref{num}.

\begin{figure*}[t]
  \begin{minipage}{0.32\linewidth}\centering
    \includegraphics[width=\linewidth]{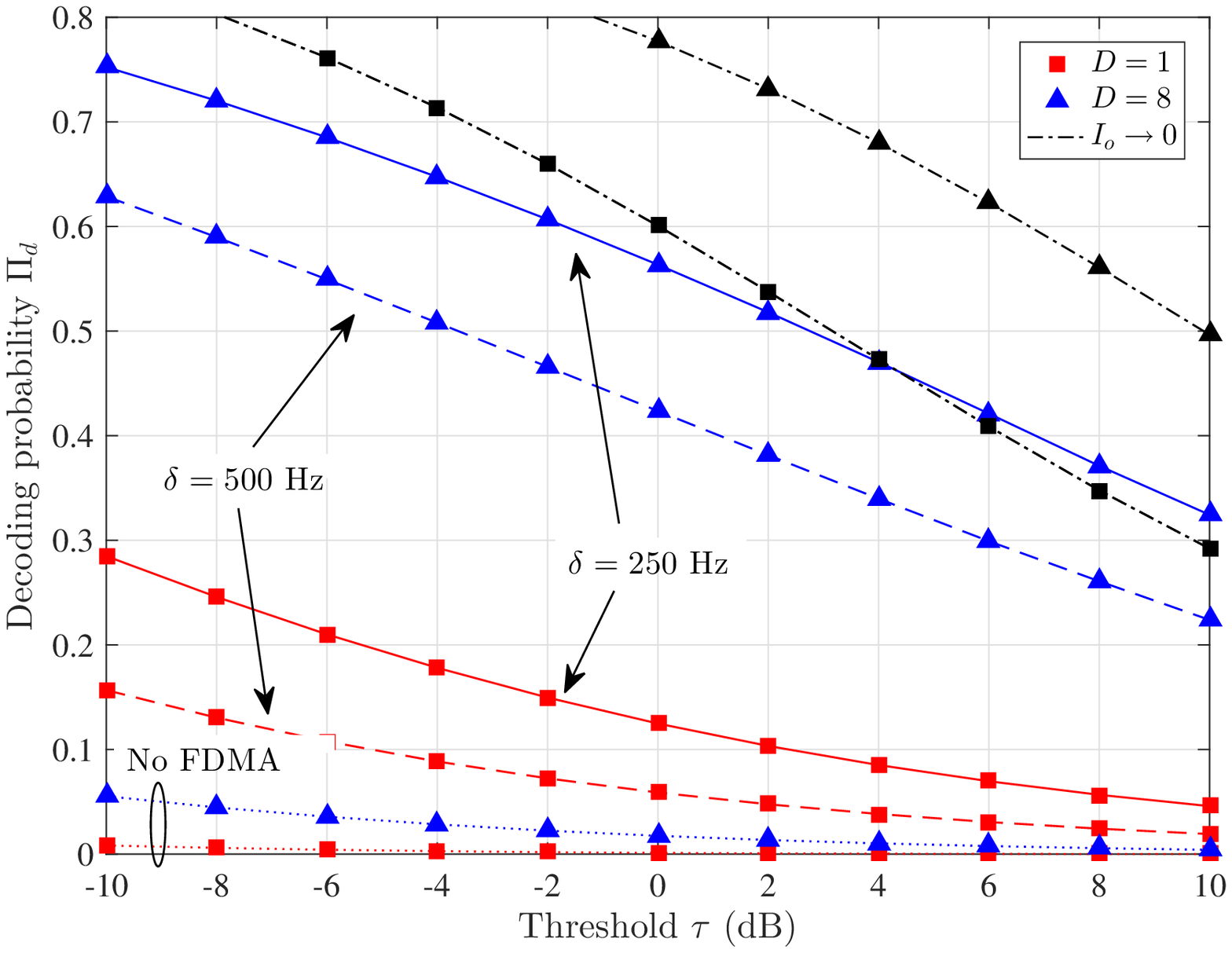}
    \captionof{figure}{Decoding probability vs threshold $\tau$; $P = 20$ dB, $\la = 1$, $\rho = 1$.}\label{fig1}
  \end{minipage}\hspace{2mm}
  \begin{minipage}{0.32\linewidth}\centering\vspace{0.7mm}
    \includegraphics[width=\linewidth,height=4.4cm]{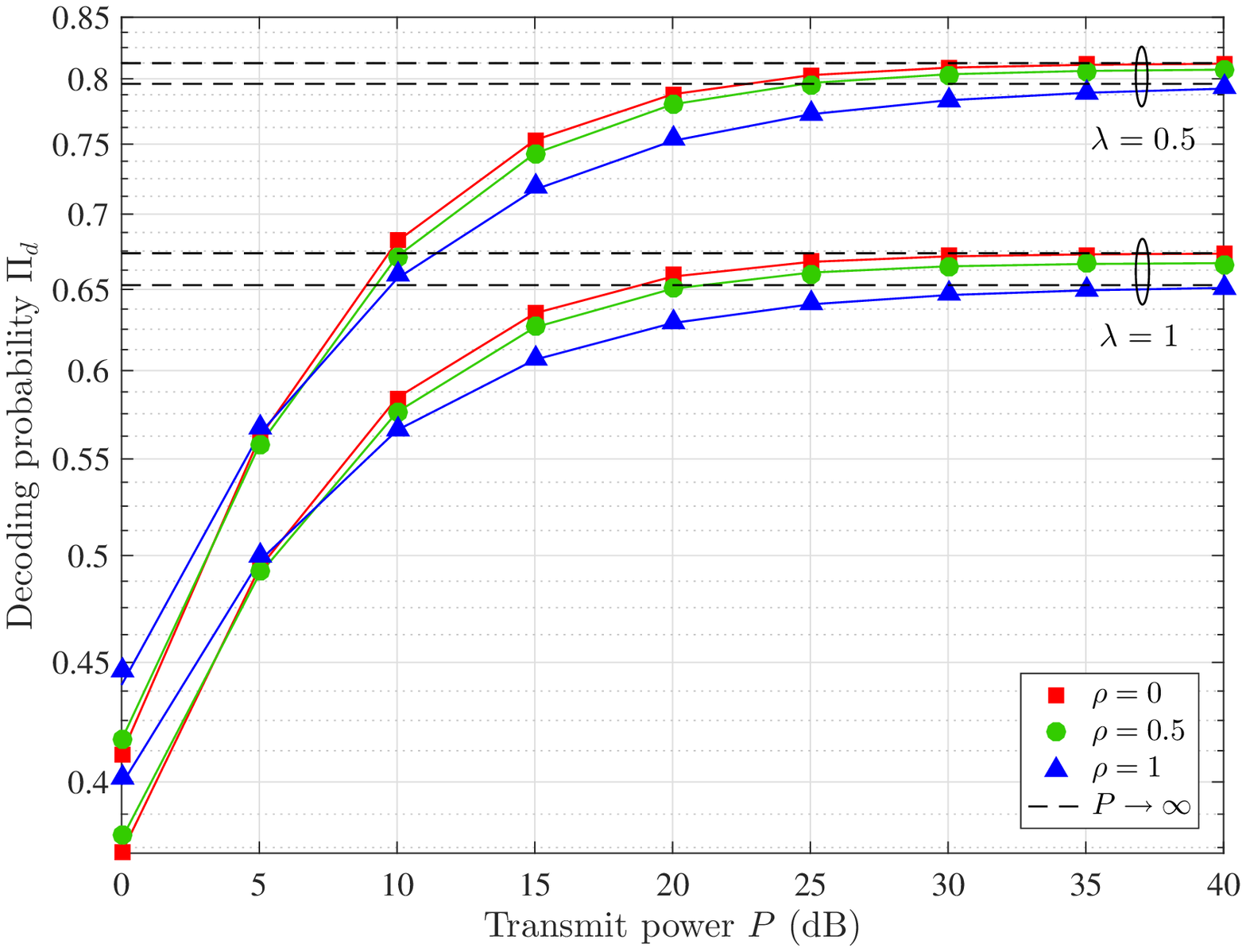}\vspace{0.3mm}
    \captionof{figure}{Decoding probability vs $P$; $\tau = -10$ dB, $D = 8$, $\delta = 500$ Hz.}\label{fig2}
  \end{minipage}\hspace{2mm}
  \begin{minipage}{0.32\linewidth}\centering\vspace{1.8mm}
    \includegraphics[width=\linewidth]{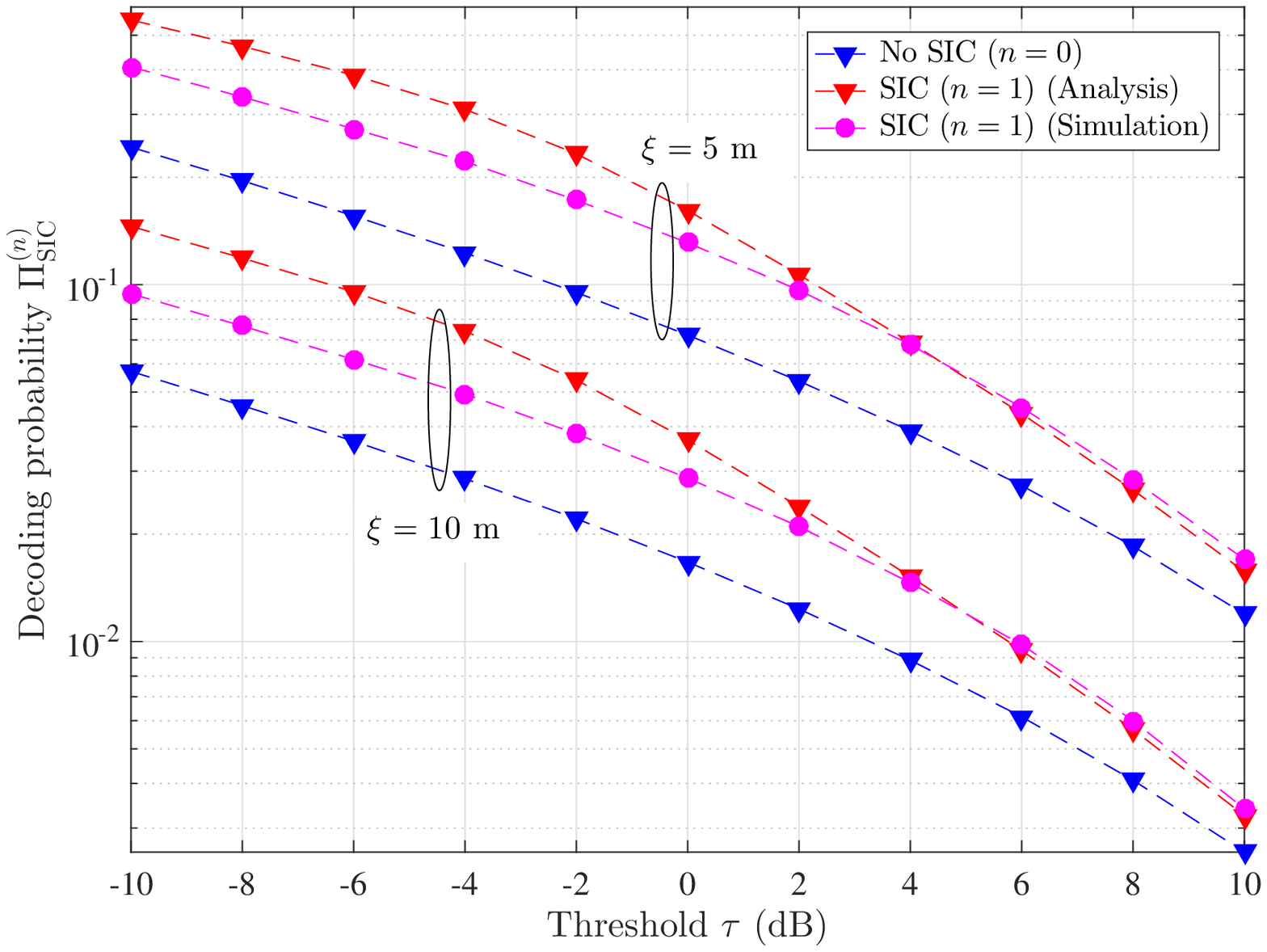}\vspace{0.5mm}
    \captionof{figure}{Decoding probability vs threshold $\tau$ with SIC; no FDMA, $D = 8$.}\label{fig3}
  \end{minipage}\vspace{-4mm}
\end{figure*}

\subsection{SIC Scenario}
We now look at the case where the reader employs SIC. Here, we consider a fading-free scenario to simplify the analysis; this is a reasonable assumption in environments, which are characterized by strong line-of-sight links. In a fading-free scenario, the order statistics are defined by the distances. In what follows, we assume $d_i \leq d_{i+1}$ $\forall z_i \in \Psi$ and use the same notation as above but drop $h$ where appropriate. We need to derive the decoding probability of the $n$-th closest sensor as well as the decoding probability of a random sensor after the $n$-th interfering signal has been removed. These can be easily deduced from Theorem \ref{thm1}. Specifically, the probability of decoding the $n$-th closest sensor $\mathsf{P_c}^{(n)}$, can be written as
\begin{align}\label{pcancel}
\!\mathsf{P_c}^{(n)} \!=\! \frac{1}{2} \!-\! \frac{1}{2\al\pi}\! \int_0^\infty \!\!\Im\left[\left(\frac{\imath t}{\tau}\right)^{\frac{1}{\al}}\! \exp\left\{\frac{\imath t \s}{\be G P}\right\} \omega(t)\right]\! \frac{\dd t}{t},\!
\end{align}
with $\omega(t) = \intd \exp\left\{-\frac{\imath t}{r^{2\al}\tau}\right\} \phi_{I_n}(t,\la',r) f_r(r) \dd r$, where $\phi_{I_n}(t,\la',r)$ is given by Lemma \ref{lem1} and $f_r(r)$ is the pdf of the distance $r$ from the origin to the $n$-th nearest sensor \cite{HAE}
\begin{align}\label{pdf2}
f_r(r) = \frac{2(\pi\la)^n}{(n-1)!} r^{2n-1} \exp\{-\pi\la (r^2-\zeta^2)\}.
\end{align}
Note that \eqref{pdf2} is slightly different to the one in \cite{HAE}, as we do not consider distances less than $\zeta$. Similarly, the decoding probability of a random sensor after the $n$-th interfering signal has been removed, denoted by $\mathsf{P_d}^{(n)}$, is given by \eqref{pcancel} but with $\omega(t) = \chi(t) \intd \phi_{I_n}(t,\la',r) f_r(r) \dd r$, where $\chi(t)$ given by \eqref{chi_fun}.

\begin{proposition}\label{prop1}
The probability of decoding a random sensor after attempting to cancel up to $n$ interfering signals is
\begin{align}
\Pi_{\rm SIC}^{(n)} \!=\! \Pi_d + \sum_{i=1}^n \!\left(\prod_{j=0}^{i-1} \left(1\!-\!\mathsf{P_d}^{\!(j)}\right)\right)\! \left(\prod_{j=1}^i \mathsf{P_c}^{\!(j)}\right) \!\mathsf{P_d}^{(i)},
\end{align}
where $\mathsf{P_d}^{(0)} = \Pi_d$ is given by Theorem \ref{thm1} (without fading).
\end{proposition}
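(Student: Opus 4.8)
The plan is to express the SIC procedure as a sequence of Bernoulli trials and then decompose the overall success event by the law of total probability. First I would model the process as a decision tree: at \emph{stage}~$0$ the reader decodes the typical sensor $o$ directly, succeeding with probability $\mathsf{P_d}^{(0)} = \Pi_d$; upon failure it attempts to cancel the strongest interferer (the nearest sensor, in the fading-free ordering), succeeding with probability $\mathsf{P_c}^{(1)}$, and, if that cancellation succeeds, it retries to decode $o$ with one interferer removed, succeeding with probability $\mathsf{P_d}^{(1)}$. Iterating, let $D_j$ denote the event that the decoding attempt at stage $j$ succeeds, so $\PP(D_j) = \mathsf{P_d}^{(j)}$, and $C_j$ the event that the $j$-th cancellation succeeds, so $\PP(C_j) = \mathsf{P_c}^{(j)}$. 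The building blocks $\mathsf{P_d}^{(j)}$ and $\mathsf{P_c}^{(j)}$ are already in hand from the specialization of Theorem~\ref{thm1} recorded in \eqref{pcancel}, so no fresh characteristic-function computation is needed; the proposition only has to assemble them.

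Next I would identify the event that $o$ is decoded \emph{for the first time} at stage $i$. Since the reader reaches stage $i$ only when every earlier decoding failed and every cancellation up to the $i$-th succeeded, this event is
\begin{align}
S_i = \Bigg(\bigcap_{j=0}^{i-1}\overline{D_j}\Bigg) \cap \Bigg(\bigcap_{j=1}^{i} C_j\Bigg) \cap D_i,
\end{align}
with the convention $S_0 = D_0$. The events $\{S_i\}_{i=0}^{n}$ are pairwise disjoint, and the decoding succeeds iff exactly one of them occurs, so $\Pi_{\rm SIC}^{(n)} = \PP\big(\bigcup_{i=0}^{n} S_i\big) = \sum_{i=0}^{n} \PP(S_i)$. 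Treating the successive decode and cancel outcomes as statistically independent, each term factorizes as $\PP(S_i) = \prod_{j=0}^{i-1}\big(1-\mathsf{P_d}^{(j)}\big)\prod_{j=1}^{i}\mathsf{P_c}^{(j)}\,\mathsf{P_d}^{(i)}$. Peeling off the $i=0$ term as $\Pi_d$ and retaining $i=1,\dots,n$ reproduces the claimed expression verbatim.

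The main obstacle is the independence invoked in that last factorization: the ordered distances $d_1 \le d_2 \le \cdots$ are order statistics of a single PPP, and the successive residual interferences $I_n$ are read off the same point field, so $D_j$ and $C_j$ are in truth correlated through the common geometry. I would handle this by emphasizing that $\mathsf{P_c}^{(j)}$ and $\mathsf{P_d}^{(j)}$ are already the correct \emph{marginal} probabilities, in which the removed interferers are absorbed by raising the lower integration limit from $\zeta$ to the $n$-th nearest-neighbour distance $r$ (this is exactly the role of $\phi_{I_n}(t,\la',r)$ together with $f_r(r)$ in \eqref{pdf2}), and then adopting the product form as the operational definition of the decoupled SIC chain, as is standard for layered SIC receivers. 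I would also flag the implicit termination rule encoded by the factor $\prod_{j=1}^{i}\mathsf{P_c}^{(j)}$: a failed cancellation at any stage aborts the chain into outage, with no offsetting $\big(1-\mathsf{P_c}^{(j)}\big)$ branch, which is precisely what makes the disjoint events $\{S_i\}$ exhaust the success event and closes the argument.
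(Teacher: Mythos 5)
Your proposal is correct and matches the paper's (very terse) justification: the paper simply states that the expression follows by assuming the decoding and cancellation events are independent, which is exactly the decision-tree decomposition into disjoint first-success events $S_i$ that you spell out, including the termination rule that a failed cancellation aborts the chain. Your explicit flagging of the independence approximation and its tightness at high thresholds is precisely the caveat the paper itself makes.
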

The above expression is derived by assuming that the events described in Section \ref{techniques} are independent; this approximation is tight for high thresholds \cite{CP}.

Finally, we consider the asymptotic scenario $I_o \to 0$, i.e. a noise-limited scenario. This refers to scenarios with very small collision probability, i.e. $p_c \to 0$ or scenarios with a massive number of sectorized antennas, i.e. $D \to \infty$. This could also be achieved by employing SIC to decode and cancel all interfering signals; however, probabilistically this scenario is not always possible. For the asymptotic case, we have the following proposition for the fully correlated case, i.e. $\rho = 1$.

\begin{proposition}
In a noise-limited scenario, the decoding probability of a random sensor when $\rho = 1$ is\vspace{-2mm}

{\small\begin{align}
\Pi_d^{I_o \to 0} = \frac{2\left(\frac{\be G P}{\tau\s}\right)^{\frac{1}{\al}}}{\al(\xi^2 -\zeta^2)} \Bigg(\!\Gamma&\left[\!\frac{2}{\al}, \zeta^\al \sqrt{\frac{\tau \s}{\be G P}}\right] \!-\! \Gamma\left[\!\frac{2}{\al}, \xi^\al \sqrt{\frac{\tau \s}{\be G P}}\right]\!\Bigg)\label{asym}.
\end{align}}
\end{proposition}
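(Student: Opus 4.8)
The plan is to exploit the fact that in the noise-limited regime the aggregate interference vanishes, so that the SINR condition of Theorem~\ref{thm1} collapses onto a single threshold on the product-channel power. First I would set $I_o \to 0$ in $\gamma_o = \be G P h_o / (d_o^{2\al}(\s+\be G P I_o))$, which reduces the decoding event $\{\gamma_o > \tau\}$ to the event $\{h_o > (\tau\s/(\be G P))\, d_o^{2\al}\}$. Hence $\Pi_d^{I_o\to0} = \PP\big(h_o > \tfrac{\tau\s}{\be G P}\, d_o^{2\al}\big)$, and the remaining task is simply to average over the fading power $h_o$ and the distance $d_o$, with no characteristic function or Gil--Pelaez inversion needed.

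Next I would compute the complementary distribution of $h_o$ for the perfectly correlated case. Substituting $\mu_f=\mu_b=1$ into \eqref{pdf_fad1} gives $f_h(h,1) = \tfrac{1}{2} h^{-1/2}\exp\{-\sqrt{h}\}$, and the change of variable $w=\sqrt{h}$ turns the tail integral into $\int_{\sqrt{y}}^{\infty} e^{-w}\,\dd w$, so that $\PP(h_o>y)=\exp\{-\sqrt{y}\}$. Conditioning on $d_o=x$ then yields $\exp\{-x^\al\sqrt{\tau\s/(\be G P)}\}$, and averaging over the radial distance with the annulus density $2x/(\xi^2-\zeta^2)$ obtained from \eqref{dist_pdf} produces
\begin{align}
\Pi_d^{I_o\to0} = \frac{2}{\xi^2-\zeta^2}\intd x\,\exp\!\Big\{-x^\al\sqrt{\tfrac{\tau\s}{\be G P}}\Big\}\,\dd x.
\end{align}

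The last step is to evaluate this integral. Writing $c=\sqrt{\tau\s/(\be G P)}$ and substituting $v=cx^\al$ recasts $\int x\,e^{-cx^\al}\,\dd x$ as $\tfrac{1}{\al}c^{-2/\al}\int v^{2/\al-1}e^{-v}\,\dd v$, and evaluating between the limits $c\zeta^\al$ and $c\xi^\al$ gives a difference of upper incomplete gamma functions via \cite[8.350.2]{GRAD}; collecting the prefactor $c^{-2/\al}=(\be G P/(\tau\s))^{1/\al}$ then reproduces \eqref{asym}. I do not anticipate a genuine obstacle here: the only mildly delicate point is the CCDF of the $\rho=1$ product channel, but the $w=\sqrt{h}$ substitution makes it immediate, and the final integral is exactly the incomplete-gamma reduction already carried out in Lemma~\ref{lem1} and Theorem~\ref{thm1}.
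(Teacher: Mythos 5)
Your proposal is correct and follows essentially the same route as the paper: reduce the event to $h_o > \tfrac{\tau\s}{\be G P} d_o^{2\al}$, use the CCDF $\exp\{-\sqrt{y}\}$ of the $\rho=1$ product channel, average over the annulus, and reduce the radial integral to a difference of upper incomplete gamma functions (the paper invokes \cite[3.381.8]{GRAD} for that last step rather than doing the substitution by hand). The computation checks out, including the prefactor $c^{-2/\al}=(\be G P/(\tau\s))^{1/\al}$.
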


\begin{proof}
For $I_o \to 0$, the decoding probability is given by
\begin{align}
&\PP\left(\frac{\be G P h_o}{d_o^{2\al}\s} > \tau\right) = \PP\left(h_o > \frac{\tau d_o^{2\al}\s}{\be G P}\right)\nonumber\\
&= \int_0^{2\pi} \intd \exp\left\{-d^\al \sqrt{\frac{\tau\s}{\be G P}}\right\} d f_d(d) \dd d,
\end{align}
which follows from the cdf $F_h(h,1) = \int_0^x f_h(h,1) \dd h = 1 - \exp\{-\sqrt{x}\}$ and $f_d(d)$ is given by \eqref{dist_pdf}. Finally, \eqref{asym} follows from \cite[3.381.8]{GRAD}.
\end{proof}

Note that for high transmit power, i.e. $P \to \infty$, we have $\Pi_d^{I_o \to 0} \to 1$ by using $\lim_{x \to 0} \Gamma[a,x] \to \Gamma[a]-\frac{x^a}{a}$. The expression for $0 \leq \rho < 1$ can be derived in a similar way but is omitted due to space restrictions.\vspace{-1mm}

\section{Numerical Results}\label{num}
We validate and evaluate our proposed model with computer simulations. Unless otherwise stated, the following parameters are used: $\lambda = 1$, $P = 20$ dB, $\zeta = 1$ m, $\xi = 10$ m, $\alpha = 2.5$, $\sigma^2 = -30$ dB, $BW = 12$ kHz, $\epsilon = 0.1$, and $\beta = 0.6$. The lines (dashed or solid) and markers in the plots represent the analytical and simulation results, respectively.

Fig. \ref{fig1} illustrates the decoding probability with respect to the threshold $\tau$ for different values of $\delta$ and $D$ and $\rho = 1$. It is clear that the performance increases with the employment of the FDMA scheme. Specifically, while $\delta$ becomes narrower the decoding probability increases, which is expected since the narrower $\delta$ is, the smaller the probability of collision becomes. Similarly, the performance increases with  the number of sectorized antennas $D$, as a smaller sector implies fewer collisions from other sensors. It is important to point out how critical $\delta$ and $D$ are to the system's performance since, for $D = 1$ and no FDMA, the decoding probability is close to zero. The scenario $I_o \to 0$ provides high performances as expected. However, for high threshold values, the case $D = 1$, $I_o \to 0$ is outperformed by $D = 8$, $\delta = 250$ Hz due to higher antenna gain when $D = 8$. Finally, our theoretical results (lines) perfectly match the simulation results (markers) which validates our analysis. Fig. \ref{fig2} depicts the effect of the correlation parameter $\rho$ and the transmit power $P$ on the decoding probability. It is clear that the correlated case performs better for small values of $P$ whereas it provides the lowest performance for higher values. As expected, the performance improves with $P$ and as $P$ increases, it converges to the upper bound. Also, the performance increases for a smaller density for all values of $\rho$. Finally, Fig. \ref{fig3} shows the decoding probability when the reader employs SIC. We consider a low-complexity scenario with $n=1$. The assumption in Proposition \ref{prop1} is validated since the approximation is tight for high threshold values. It is obvious from the figure that the employment of SIC provides significant gains to the performance even when just the signal from the first nearest sensor is cancelled.

\section{Conclusion}
In this letter, we studied backscatter communication sensor networks with spatial randomness. We investigated three techniques for collision resolution: ultra-narrow band transmissions by the sensors as well as antenna directionality and SIC at the reader. Mathematical expressions for the decoding probability were derived using tools from stochastic geometry. Our results showed that a combination of these techniques is needed to achieve significant gains.

\end{document}